\newtheorem{theorem}{Theorem}
\newtheorem{corollary}{Corollary}
\title{Large Language Models and\\ the Extended Church-Turing Thesis\thanks
{The research of the first author was partially supported by Grant No. CK04000150 EBAVEL of the Czech Technology Agency, programme Strategy AV21 ``Philosophy and Artificial Intelligence", and the Karel \v Capek Center for Values in Science and Technology.}}
\author{Ji\v{r}\'{\i} Wiedermann
\institute{Institute of Computer Science\\ Czech Academy of Sciences\\
%and Karel \v Capek Center\\ for Values in Science and Technology\\
Prague, Czech Republic}
\email{jiri.wiedermann@cs.cas.cz}
\and
Jan van Leeuwen
\institute{Department of Information \\and Computing Sciences, \\Utrecht University, the
Netherlands}
\email{\quad J.vanLeeuwen1@uu.nl}
}
\begin{document}
\maketitle

\begin{abstract}
The Extended Church-Turing Thesis (ECTT) posits that all effective information processing, including unbounded and non-uniform interactive computations, can be described in terms of interactive Turing machines with advice. Does this assertion also apply to the abilities of contemporary large
language models (LLMs)? From a broader perspective, this question calls for an investigation of the computational power of LLMs by the classical means of computability and computational complexity theory, especially the theory of automata. Along these lines, we establish a number of fundamental results. Firstly, we argue that any fixed (non-adaptive) LLM is computationally equivalent to a, possibly very
large, deterministic finite-state transducer. This characterizes the base level of LLMs. We extend this to a key result concerning the simulation of space-bounded Turing machines by LLMs. Secondly, we
show that lineages of evolving LLMs are computationally equivalent to interactive Turing machines with advice. The latter finding confirms the validity of the ECTT for lineages of LLMs. From a computability viewpoint, it also suggests that lineages of LLMs possess super-Turing computational power. Consequently, in our computational model knowledge generation is in general a non-algorithmic process realized by lineages of LLMs.  Finally, we discuss the merits of our findings in the broader context of several related disciplines and philosophies.
\end{abstract}

\section{Introduction}

\noindent{\bf Historical context}
Back in 2001, at the occasion of entering a new millennium, a notable book entitled {\em ``Mathematics Unlimited --- 2001 and Beyond"} appeared, giving a unique overview of the state of mathematics at the end of the twentieth century and offering remarkable insights into its future development and that of its related fields. In one of the book chapters, we investigated the role of the classical Turing machine paradigm in contemporary computing  \cite{vLW:01}. Especially, we were interested in whether the {\em Church-Turing Thesis} (CTT), claiming that every algorithm can be described in terms of a Turing machine, has withstood the `test of time'. Did this thesis still apply to all modern computations as we witness them today? We characterized the latter as computations that fulfill three conditions that appeared with the advent of modern computing technologies: {\em non-uniformity of programs, interactivity} (or {\em reactivity}), and {\em infinity of operation.} Based on this observation, we brought evidence and argumentation in favor of what we called the {\em Extended Church-Turing Thesis} (ECTT), which can be seen as a strengthening of the classical Church-Turing Thesis from this newer perspective:
\begin{quote}
{\em {\bf Extended Church-Turing Thesis:} All effective information processing, including unbounded and non-uniform interactive computations, can be modeled in terms of interactive Turing machines with advice.}
\end{quote}

\noindent
Interactive Turing machines with advice (ITM/As) are variants of Turing machines that accommodate
the above-mentioned, most general conditions: non-uniformity of programs (covered by the advice mechanism), interactivity, and potentially unbounded operation \cite{vLW:01}. The
 extended thesis still perceives Turing machines as etalons of all underlying computations.

Examples of effective processes covered by the ECTT, aside from those covered by the classical CTT, include artificial computational systems like the Internet, sensory nets, ad-hoc nets, cognitive systems, relativistic computing, and natural data processing in DNA, cells, brains, and the Universe, and similar systems.

Since the formulation of the ECTT, more than 20 years have passed. Today, new pretenders to the most complex software systems have emerged: large language models (LLMs). Namely, as the pinnacles of contemporary information technology, LLMs mark a paradigm shift in the AI landscape, offering so-far unseen properties: emerging autonomy, and amazing abilities of natural language processing and understanding. Some even see LLMs as messengers of artificial general intelligence (AGI) \cite{Ag:2023}.

\medskip\noindent{\bf Motivation}
From this perspective, the following questions suggest themselves. Does the ECTT apply to the effective behavior of LLMs, i.e., can ITM/As simulate LLMs? And if so, can LLMs simulate any ITM/As, or is it beyond their power?

These questions seem to be of purely academic interest, but answering them is significant for several reasons. For one, a positive answer to the first question would confirm the thesis's validity for LLMs and the ITM/A would keep its position as a lighthouse pointing the pathways in modern computing. This in turn would confirm the fundamental position of Turing machines as a model of computation.

The second question is challenging as well. Namely, it calls for mutual simulations between both kinds of devices. Alas, it is hard to imagine more opposing models. An ITM/A is an elegant mathematical abstraction of purposeful information processing aiming at maximal computational efficiency, whereas an LLM is a pinnacle of contemporary real information processing technology.

Yet both devices share a common inspiration: the human brain and its information-processing abilities. From this perspective, both devices are complementary. On the one hand, ITM/As seem to be suitable mainly for modeling the brain body-controlling abilities by performing mechanistic, non-uniform logic-inspired processing of non-verbal information, using potentially unbounded additional memory. On the other hand, LLMs epitomize a biologically inspired approach mimicking the ability of the human brain to learn, produce, and understand natural languages (i.e., verbal information) by finite-state machines. Can one imagine a conceptually more distant pair of devices?

\medskip\noindent{\bf Contributions and results}
Along the previous lines, the paper contributes to the present state of knowledge about the computational power of LLMs, by answering the questions as implied by the ECTT.

First, we present mutual simulations between interactive finite-state transducers and fixed (trained and non-adaptive) LLMs. This characterizes the computational power of both kinds of devices by proving their computational equivalence, within the limit of the feasible context windows of the LLMs.

Second, we present a fundamental simulation of a space-restricted TM by a ``standard", sufficiently large LLM. The vocabulary and word-to-vector embedding mechanisms of the simulated LLM must be adjusted for processing the ``language" represented by the computations of the simulated TM. Nevertheless, the necessary adjustments do not intervene in the standard architecture and internal workings of LLMs. The simulation also reveals the interdependence between the space complexity of the simulated TM and the size of the word-to-vector representation used by the simulating LLM.

Third, the insights gained from the previous simulations enable the design of mutual simulations between ITM/As and {\em lineages} of evolving LLMs, i.e.\ of sequences of ever more capable LLMs. It leads to a characterization of the computational power of the latter: lineages of LLMs possess super-Turing computational power. In the recent investigations of the computational complexity of LLMs (e.g.\ \cite{Sch:23,Stro:24,Tai:22}), the results characterizing the computational power of evolving LLMs seem to be the first to deal with the general complexity-theoretic aspects of this kind of device. If we see the LLMs as knowledge generating devices, then their super-Turing abilities indicate that in this computational model, knowledge is non-computational: it cannot be generated in general by computers satisfying the classical Church-Turing Thesis.

The paper concludes with a discussion of the amazing knowledge-generating capabilities of (very) large finite state systems, their limits, and the influence of the paradigm change in AI brought forth by the technologies used by LLMs.

Closing this introductory section, a word of warning concerning the paper's presentation style is in place. We see our paper as a pioneering, pre-formal paper making a preliminary inquiry into the newly born field of LLMs from the viewpoint of the computability theory. Necessarily, it is less formal since the basic concepts related to LLMs and their understanding are still in statu nascendi. Priority is given to the investigation of the scope of the new research field, the potential of automata theory to solve the related problems, and their merits for associated disciplines and philosophies. Therefore, the paper presents only the outlines of the basic concepts, proofs and contemplations on the respective achievements.

\section{Simulations between LLMs and finite-state transducers}
When it comes to characterizing the computational power of LLMs, note that a single LLM cannot in principle simulate an arbitrary Turing machine, simply because LLMs are finite-state devices and no such device can simulate an infinite-state machine. Thus, to characterize the computational power of LLMs, we must look for weaker models than Turing machines. Finite-state transducers (FSTs) offer themselves as a natural choice. We will argue first that deterministic FSTs and fixed (non-adaptive) LLMs are computationally equivalent, within the input limits of the latter.

As far as the computational model of an LLM is concerned, we will consider a standard, or nowadays one might even say, `classical' elementary ChatGPT model as described, e.g., in \cite{Wo:23}. These models are able to learn probability distributions of words (tokens) during a training phase, and make use of it during their inference phase. Probabilistic mechanisms are not included in the LLM architecture. During inference, their behavior is influenced by their initial setting and a learned probabilistic distribution captured from the training data.

From a computability viewpoint, trained LLMs can be viewed as deterministic interactive finite-state systems producing knowledge in response to the initial prompts. They work with finite precision weights and learned fixed statistical behavior, devoid of any memory or functional augmentations. They read their inputs in a sequential manner. LLMs require substantial computational resources for their deployment, although the development of their transformer decoder-based technology leads to ever more efficient implementations \cite{Tai:22}.

A {\em finite-state transducer} is a finite-state machine with two tapes, following the terminology for Turing machines: a two-way read-only input tape and a write-only output tape. An FST reads strings of a given set on the input tape and generates strings of a related set on the output tape. An FST is allowed to make steps that do not consume an input symbol ($\epsilon$-moves), to reflect `internal processing' of state information. A general FST can be thought of as a generator of strings, a deterministic FST as a mapping  from input to output strings.

The input-to-output properties of  FSTs relate their abilities to those of LLMs. However, the models
are still very different. For example, LLMs have an intrinsic ability to learn and adapt based on vast amounts of data and using statistical methods, whereas FSTs are designed with static rule-sets or transition tables. Also, LLMs normally process inputs from `context windows' of some bounded size, to allow for the effect of various mechanisms to generate adequate output, whereas FSTs have no such limitation on their input sequences. Techniques for `long text processing' of LLMs are rapidly advancing \cite{Dong:23,Mu:24} but not yet standard.

For a basic comparison we restrict attention to the capabilities of {\em fixed} (or, static) LLMs, i.e.,
LLMs that are trained and ready but do not change or adapt during their operation. We may assume that these LLMs are deterministic and of bounded size (in bits). To enable a mutual simulation
between the two kinds of devices, we assume that FSTs and LLMs work over the same alphabet and that an LLM's parameters are represented as finite strings of bits. In order to keep the modeling discrete and finite-state we do not assume the use of analog neural nets working with real or rational weights (cf.\ \cite{Sie:12}).

\begin{theorem}\label{thm:LLM2FST} Let $\cal L$ be a fixed LLM. Then, there is a deterministic FST $\cal T$ simulating $\cal L.$
\end{theorem}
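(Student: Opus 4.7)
The plan is to exploit the fact that every structural ingredient of a fixed LLM $\cal L$ is finite and bounded: the vocabulary (token set) is finite, the weights are stored with finite precision, the number of parameters is fixed, and the context window has some bounded length $k$. Together, these imply that the total information content of $\cal L$ at any instant during inference is bounded by a constant number of bits. This is the crucial observation that makes a finite-state simulation possible in principle; the rest is a matter of packaging it correctly.

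First I would formalize the ``configuration'' of $\cal L$ at a given moment as the tuple consisting of (a)~the current contents of the context window, padded with a blank symbol when not full, and (b)~any auxiliary finite-precision bookkeeping that the inference loop maintains between tokens (e.g., position counters, indicators of whether the model is currently ingesting input or emitting output). Since the weights are fixed and the precision is finite, the set $Q$ of all such configurations is a finite set. I would take $Q$ to be the state set of $\cal T$, with a distinguished initial state corresponding to the empty context and the model's default inference setting. The input and output alphabets of $\cal T$ are chosen to coincide with the token alphabet of $\cal L$.

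Next I would define the transition function of $\cal T$ by faithfully mirroring one inference step of $\cal L$. When $\cal L$ consumes the next input token, $\cal T$ reads the corresponding input symbol and moves to the state whose context window has been updated accordingly (oldest token dropped if the window is full). When $\cal L$, in its deterministic decoding mode, produces an output token from the current context, $\cal T$ performs an $\epsilon$-move that writes that token on the output tape and updates the context window to reflect the newly produced token being fed back in the autoregressive loop. Any intermediate computations of $\cal L$ (attention, feed-forward layers, sampling under a fixed deterministic rule such as greedy or temperature-zero decoding) are entirely encoded in how the next-state function of $\cal T$ is wired, because they are deterministic functions of the current configuration alone.

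The main obstacle, in my view, is not the combinatorics but the rigorous bridge between the informal operational description of a ``classical'' ChatGPT-style model and the discrete transition table of an FST. In particular, one must commit to (i)~a fixed deterministic decoding policy, so that no genuine randomness is involved, (ii)~a fixed finite-precision arithmetic, so that state collapse is well-defined, and (iii)~a fixed convention for when the model stops producing output and is ready to accept the next input chunk, so that the $\epsilon$-moves and input-consuming moves can be cleanly separated. Once these modeling choices are pinned down, the correctness argument is a routine induction on the number of inference steps showing that, after processing any input prefix, the state of $\cal T$ encodes exactly the configuration of $\cal L$ and that the output tapes of both devices agree symbol-for-symbol.
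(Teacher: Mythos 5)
Your proposal is correct, but it takes a genuinely different route from the paper. The paper's proof is indirect: it observes that a fixed LLM is a deterministic, bounded-size system, invokes the classical Church--Turing Thesis to obtain a simulating Turing machine, notes that this machine runs in constant space because the LLM is finite-state, and then appeals to the standard equivalence of constant-space Turing machines with finite automata to extract the FST. You instead build the FST directly: you take the finite set of LLM configurations (context-window contents plus finite-precision bookkeeping) as the state set and wire the transition function to mirror one inference step, with $\epsilon$-moves for autoregressive output generation. Your construction is more explicit and self-contained --- it does not route through the CTT or the folklore fact about constant-space machines --- and it has the virtue of surfacing the modeling commitments (deterministic decoding, finite-precision arithmetic, a fixed input/output phase convention) that the paper compresses into the single sentence ``We argued that $\cal L$ is deterministic and of bounded size.'' What the paper's argument buys in exchange is architecture-independence: it applies verbatim to any bounded deterministic interactive device, whereas your argument leans (harmlessly, but visibly) on the specific shape of the transformer inference loop. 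Both are sound at the outline level of rigor the paper adopts.
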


\begin{proof} (Outline.)
Let ${\cal L}$ be a fixed LLM. We argued that $\cal L$ is deterministic and of bounded size.  The Church-Turing thesis guarantees that the deterministic system ${\cal L}$ can, in principle, be simulated by a Turing machine with the same input convention. Since ${\cal L}$ is of bounded size and hence  a finite-state system, the simulating Turing machine is of constant space complexity for all inputs and thus, computationally equivalent to some deterministic FST $\cal T.$ This proves that theoretically, a sufficiently large and complex FST could simulate the behavior of $\cal L.$
\end{proof}

The following remarks are in order. The simulation in Theorem \ref{thm:LLM2FST} can also be seen
as a philosophic thought experiment, revealing strong and weak aspects of the involved devices. On the one hand, it leads to the illuminating finding that, from a computability-theoretic point of view,
fixed LLMs are essentially deterministic finite-state transducers, which are fundamental devices of automata theory.  On the other hand, from a computational complexity point of view, a fixed LLM
with millions or billions of parameters would translate into a finite-state transducer with an astronomical number of states and transitions among them. This `state explosion' makes the resulting FST computationally intractable, but efficiency has not been our current concern. Nor has it been our concern that the resulting FST
would be incredibly complex and opaque and would not offer the same level of insight into how the
LLM arrives at its outputs, compared to analyzing the LLM's internal architecture directly.

The proof of the reverse to Theorem \ref{thm:LLM2FST} is more involved since, to simulate a deterministic FST on an LLM, one can only use the standard LLM mechanisms that are quite
incompatible with an FST's computational mechanisms. We will argue that the processing of an input string by a deterministic FST can be simulated by the `answering' of the same input as a prompt by a suitable (fixed) LLM.

\begin{theorem}\label{thm:FST2LLM} Let $\cal T$ be a deterministic FST. Then, for any $n \geq 1$, there is a fixed LLM $\cal L$ simulating $\cal T$ on all words of length at most $n$.
\end{theorem}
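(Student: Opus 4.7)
The plan is to exploit the fact that, once the input length is capped at $n$, the restriction of $\cal T$'s behaviour to words in $\Sigma^{\leq n}$ is a finite input--output mapping that a suitably sized fixed LLM can realize using only standard components (tokenizer, embedding and positional encoding, transformer layers, and autoregressive decoder).

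First I would configure $\cal L$ so that its context window is at least $n$ plus a margin that accommodates the longest possible output of $\cal T$ on inputs of length $\leq n$; the latter quantity is finite because a deterministic FST cannot chain unboundedly many $\epsilon$-moves on a bounded input without entering a non-productive loop, which we exclude. I would let the token vocabulary of $\cal L$ contain the input alphabet of $\cal T$, the output alphabet of $\cal T$, and dedicated separator and end-of-answer tokens, and I would fix an injective embedding so each token receives a distinct vector.

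Next I would fix the transformer weights so that processing the prompt $w = w_1 \cdots w_k$ with $k \leq n$ internally reconstructs the FST computation. Positional encodings identify the $k$ prompt positions; causal attention heads at position $i$ select positions $1, \ldots, i$; and the feed-forward sub-layers compute, from the embedded prefix $w_1 \cdots w_i$, the FST state reached after reading that prefix together with the symbols to be emitted on the corresponding transition. During generation, the autoregressive decoder attends to the previously emitted output tokens and to the prompt to determine the next output token, finally emitting the end-of-answer marker. Since transformers of sufficient width and depth can represent any Boolean function of their inputs and we only need to represent a specific finite mapping, such weights exist.

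The step I expect to be the main obstacle is ensuring that the simulation relies solely on genuinely standard LLM mechanics rather than on auxiliary machinery added for convenience. In particular, the soft-attention mechanism must be coerced into effective hard selection by choosing weights for which the softmax peaks sharply on the relevant position, and the handling of $\epsilon$-moves (where several output symbols correspond to a single input symbol, or none at all) must be realized through the autoregressive decoding loop rather than by modifying the architecture. The bound $n$ is precisely what keeps the construction finite and therefore representable as a fixed LLM; removing it is the gap the paper bridges later by moving to lineages of LLMs.
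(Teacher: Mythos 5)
Your construction is essentially the paper's: both simulate $\cal T$ by an autoregressive loop in which each generated token corresponds to one transition of the transducer, with soft attention coerced into effectively hard selection of the single relevant input position, and with the bound $n$ keeping the whole run inside a fixed context window. The one substantive point to fix is that the paper defines $\cal T$ as a \emph{two-way} transducer: its transition rules carry a head-move component, and the simulating LLM must maintain $O(\log n)$ bits of head-position information. Your step ``the feed-forward sub-layers compute, from the embedded prefix $w_1 \cdots w_i$, the FST state reached after reading that prefix'' implicitly assumes a one-way left-to-right sweep; for a two-way machine the state at the moment the head sits on position $i$ is not a function of that prefix, so this step fails as written. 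It is repairable entirely within your own framework: let each generated token encode the pair (current state, head position) and have the decoder attend to the prompt symbol at that position---which is in effect what the paper does by emitting the transition rules themselves as ``words'' and decoding the output symbols from them afterwards. The remaining difference is only presentational: you fix the weights directly and invoke the finite-function expressiveness of transformers, while the paper phrases the same existence claim in terms of the LLM ``learning'' the single-valued transition function; for the existence statement being proved, your version is if anything cleaner.
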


\begin{proof} (Outline.) Given $n$, we indicate how to construct an LLM $\cal L$ as required. Note that the given FST $\cal T$ is a deterministic two-way transducer with a finite number of states, a
finite number of input and output symbols or phrases, and a finite number of transition rules. For an LLM, this makes it a relatively simple system to learn: it  can just represent the single-valued transition function of $\cal T$ with rules $(current\_state, symbol\_read)\rightarrow (head\_move, new\_state, output\_symbol),$ for every pair $(current\_state, symbol\_read).$

The transition rules of $\cal T$  can be directly embedded into the vector representations used by an LLM. The rules can be seen as the `words' of the language that is to be processed by $\cal L$, the sentences are the sequences of `words' as they are applied in deterministic order when $\cal T$ is processing an input string. The LLM can efficiently learn the behavior of $\cal T$ because the LLM's attention mechanism can effectively capture the single-valued relationship between a tuple $(current\_ state, symbol\_read)$ and the corresponding tuple $(head\_move,new\_state, output\_symbol).$

When presented with a prompt (input) of size at most $n$, $\cal L$ stores it for reference and begins the generation of the `answer' (output string) like $\cal T$ would. The attention mechanism `predicts' (here, identifies) the unique transition rule based on the current state and input symbol, $\cal L$ outputs it as the `next' word of the output, updates its state and $O(\log n)$-size positional information, and repeats, thus generating the output string word after word (until $\cal T$ would stop or some limit is exceeded). In stead of outputting the words, $\cal L$ can `decode' them and produce the output symbol that is contained in them. This effectively makes $\cal L$ into a fixed LLM that mimics the deterministic transducer within the limit of the context windows that it can handle.
\end{proof}

There are several implicit facts one may observe in the proof. First, the proof illustrates that in general learning the prediction-of-the-next-word idea is enough. The learning of the FST makes use of exactly this idea.
Second, the language of a (deterministic) FST to be learned is composed of fragments of sequences
of state transitions covering the state diagram of the transducer. Such a state diagram consists of a finite number of cycles covering this diagram, and each computation presents a concatenation of finite paths along these cycles. Hence what has to be learned is a finite number of fragments covering the state diagram between the ``crossroads" where the cycles meet.

We return to Theorem \ref{thm:FST2LLM} in Section \ref{sec:interior}.
The two theorems lead to the following conclusion, respecting the input limitations of the LLMs.

\begin{theorem}\label{thm:FST-LLM} The computational power of fixed LLMs equals that of deterministic FSTs. \end{theorem}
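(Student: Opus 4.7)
The plan is to derive Theorem~\ref{thm:FST-LLM} by directly composing the two simulations established in Theorems~\ref{thm:LLM2FST} and~\ref{thm:FST2LLM}, reading the equivalence modulo the standing assumption that inputs fit within an LLM's context window. Since computational equivalence here means mutual simulation on a common input domain, the proof should amount to little more than citing the two preceding theorems and pointing out that together they supply both directions of the reduction.

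First I would invoke Theorem~\ref{thm:LLM2FST} for the inclusion \emph{fixed LLMs $\subseteq$ deterministic FSTs}: given any fixed LLM $\cal L$, that theorem already supplies a deterministic FST whose input-to-output behavior coincides with that of $\cal L$ on every input $\cal L$ can process, and this direction carries no length restriction. Next, for the converse inclusion, I would fix an input-length bound $n$ and apply Theorem~\ref{thm:FST2LLM} to an arbitrary deterministic FST $\cal T$, obtaining a fixed LLM that reproduces $\cal T$'s input-to-output behavior on all inputs of length at most $n$. Putting the two directions together yields mutual simulation between the classes of devices on the common bounded-length input domain, which is precisely the notion of computational equivalence claimed.

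The only delicate point — and the place where I would spend most of the prose — is reconciling the asymmetry between the two underlying theorems: the LLM-to-FST simulation is unconditional on input length, whereas the FST-to-LLM simulation is parameterized by a bound $n$. I would handle this by stating the equivalence in the appropriate parameterized form: for every $n \geq 1$, the class of input-output maps realized by fixed LLMs on inputs of length at most $n$ coincides with the class realized by deterministic FSTs on the same inputs, and any fixed LLM conversely corresponds to a single deterministic FST on its entire feasible input domain. This reading matches the explicit qualifier ``respecting the input limitations of the LLMs'' already flagged just before the theorem, and it explains why Theorem~\ref{thm:FST-LLM} is phrased qualitatively rather than with explicit quantifiers over input length. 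I would close the proof by noting that, as remarked after Theorem~\ref{thm:LLM2FST}, the equivalence is one of computational power only, not of descriptive complexity or efficiency.
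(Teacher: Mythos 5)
Your proposal is correct and matches the paper's own (implicit) argument exactly: the paper derives Theorem~\ref{thm:FST-LLM} simply by combining Theorems~\ref{thm:LLM2FST} and~\ref{thm:FST2LLM}, with the phrase ``respecting the input limitations of the LLMs'' covering precisely the asymmetry you identify. Your parameterized reading of the equivalence is a faithful, slightly more explicit rendering of what the paper intends.
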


Theorem \ref{thm:FST-LLM} is the ground level result for our purposes. More powerful simulations can be proved for LLMs that allow for enhanced capabilities of the transformer decoders. For an overview of this recent research area, see e.g.\ \cite{Merrill:23,Stro:24}.

\section{Simulations between LLMs and interactive Turing machines with advice}\label{sec:LLMs-ITM/As}
We now turn to our main question: the position of LLMs with respect to the ECTT. How to deal with
the development towards more and more powerful LLMs in this respect? In answering it, an important role will be played by the simulation of (deterministic) Turing machines by LLMs.

\subsection{Simulating Turing machines `inside' of LLMs}\label{sec:interior}

When contemplating the simulation of Turing machines (TMs) by LLMs, a first solution that comes to
mind is the one that led A.M.\ Turing to the design of his `Turing machine'. In this solution, the LLM
at hand is augmented with an external, potentially unbounded memory that will take the role of the tape of the simulated TM, and the LLM itself will merely serve as the finite-state control of that machine. Essentially, this is the solution presented by Schuurmans \cite{Sch:23}, who showed how to operate an external read-write memory using specific prompts to simulate computations of a universal TM. In doing so, it was not necessary to modify the LLM's weights, which the author sees as a key aspect of his proposal.

In our present approach, we strive for a different exploitation of the computational potential of LLMs, without augmenting them with any external memories -- by scrutinizing the resource limits of their computational mechanisms. This is achieved by specializing an LLM to the desired `degree' in its only task, the simulation of the given TM as far as the finite-state nature of the model allows it. Accepting the fact that LLMs are finite-state devices it is clear that, if the space complexity of the simulated TM grows with the size of its inputs, we cannot hope for its simulation by an LLM on all inputs. That is, we must accept that our simulation by an LLM of a given size can only work for a TM up to a certain fixed space complexity bound for its work tape(s). This is what we call a TM simulation `inside' of an LLM.

Now a crucial question arises that has to be answered first: where can we gain the space needed for representing a TM and its computations inside an LLM? How can we exploit the existing LLM's architecture and mechanisms, without introducing new ones? To find a solution to this problem, we look for an analogy between natural language processing (NLP) by LLMs and TM computations. Can we interpret a TM computation as a language-processing task? We consider the generic case of deterministic multi-tape TM acceptors.

Before explaining the analogy, let's look at a common representation of a successful computation by the given TM. On processing a given input, the TM generates a `sequence of configurations'. The input is written on the separate read-only input tape -- which will be presented as a `prompt' to the simulating LLM. Each configuration of the TM consists of a `listing' of its current work tape configurations. Each work tape configuration is represented by the contents of that work tape and the position of the read/write head on that tape. The input symbol currently scanned by the TM's reading head on the input tape and the current state are appended to the end of the current list of work tape configurations. The `sequence of configurations' of the TM starts with the list of initial configurations for each work tape. It ends with an accepting configuration or a configuration that exceeds the allowable space limit on the work tapes as derived from the size of the LLM. (We assume that looping is prevented, e.g.\ by timing constraints.) The transition function of the simulated TM orders
the configurations in a `valid' TM computation.

To see the analogy with natural language processing, one may view the configurations as the `words' of a fictive `Turing machine language'. The syntax of these words is given by the prescription for correct configuration representations. Sequences of such words represent sentences, or their fragments, of the underlying `Turing machine language'. The semantics of the language is given by the orderings of such words following the TM transition function. As a result, the words in a sentence are ordered according to the cause-and-effect principle, because the simulated TM is assumed to be deterministic. Any semantically correct sequence of such words represents a valid fragment of TM computations, and its processing gives it its meaning.

\begin{theorem}\label{thm:interior} Let $\cal M$ be a deterministic multi-tape TM of space
complexity $S(n)$. Then, for any $n$ and $k$ such that $S(n)\le k,$ there is an LLM $\cal L$ using
word-to-vector embeddings of size $O(k)$ simulating $\cal M$ on any input of length $n.$
\end{theorem}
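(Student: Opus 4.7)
The plan is to exploit the configuration-as-word analogy developed just above the theorem, turning one step of $\cal M$ into one autoregressive prediction step of $\cal L$. Fixing $n$ and $k\ge S(n)$, I would first settle on an encoding of configurations. Each reachable configuration of $\cal M$ on an input of length $n$ consists of (i) the contents of the work tapes, of total length at most $k$; (ii) the positions of the work-tape heads, each costing $O(\log k)$ bits; (iii) the current input-head position and state, costing $O(\log n)$ and $O(1)$ bits respectively. The whole configuration thus fits in $O(k)$ bits and is declared to be a single ``word'' of the simulated Turing machine language. Its word-to-vector embedding is a straightforward field-wise encoding of dimension $O(k)$, matching the bound in the statement.

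Next I would construct $\cal L$ so that its behaviour mimics the single-valued transition function $\delta$ of $\cal M$. The input word $w$ of length at most $n$ is supplied as the prompt. The LLM emits the initial configuration of $\cal M$ on $w$ as its first output token, and then iteratively emits the $\delta$-successor of the current configuration, using $w$ (still present in the prompt) to read off the currently scanned input symbol via the input-head-position field of the current configuration. Because $\delta$ is deterministic and single-valued, the ``next-word'' prediction task is in fact a hard function, and by the learning argument already used in the proof of Theorem \ref{thm:FST2LLM} the attention mechanism of $\cal L$ can realise it exactly. Generation halts as soon as $\cal L$ emits an accepting configuration, or when a step would exceed the space bound $k$ (detected by a dedicated overflow token reserved in the vocabulary).

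The main obstacle, as I see it, lies in establishing that a single autoregressive step of the standard LLM architecture operating on $O(k)$-dimensional embeddings can actually implement one step of $\delta$ on configurations of size $O(k)$. Each such step is, however, a local edit of the current configuration: it rewrites $O(1)$ tape cells (one per work tape), adjusts $O(1)$ head-position counters by $\pm 1$, and replaces the state field. This update is well within the expressive reach of a fixed-depth transformer whose attention heads first address the cells under each work-head and whose feed-forward layers then compute the local rewrite and counter update. No enlargement of the embedding dimension beyond $O(k)$ is needed; only the number of parameters (and the vocabulary, of size $2^{\Theta(k)}$) grows with $k$, which is unconstrained by the theorem. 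Iterating this local update through the autoregressive loop yields exactly the configuration sequence of $\cal M$ on input $w$, and under the hypothesis $S(n)\le k$ this sequence never exceeds the available embedding space, so the whole accepting (or rejecting) computation of $\cal M$ is faithfully reproduced inside $\cal L$.
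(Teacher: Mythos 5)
Your proposal is correct and follows essentially the same route as the paper: configurations of size at most $k$ become the ``words'' of a Turing-machine language embedded in $O(k)$-dimensional vectors, and the deterministic transition function is realized as exact next-word prediction via the attention mechanism, as in Theorem~\ref{thm:FST2LLM}. Your only additions are slightly more concrete than the paper's outline (the explicit field-wise encoding, the ``local edit'' argument for one transformer step, and the overflow token in place of the paper's step-counting halting assumption), but the decomposition and key ideas are the same.
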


\begin{proof} Given $n$ and $k$ such that $S(n) \leq k$, we construct an LLM $\cal L$ that simulates $\cal M$ on inputs of size $n$. Without loss of generality, $\cal M$ may be assumed to be
always halting. (Note that halting computations cannot be longer than the number of different configurations of $\cal M$, which is bounded by $c^{\log n + S(n)}$ for some constant $c$. This can be checked by keeping a count of the number of steps. If $S(n) \geq \log n$, this can be done within the space bound by $\cal M$ itself, otherwise  it can be done by $\cal L$ itself.) $\cal L$ will be a `standard' LLM with word-to-vector embeddings of size $O(k)$ with a special attentional mechanism to be described later in this proof.

In the training phase, our initially ``empty" LLM must be trained on various valid fragments of the TM computation for inputs of size $n$, i.e.\ with configurations of size at most $S(n)$, where $S(n)\le k$. Doing so, the set of all work tape configurations up to size $k$ will become the basic `vocabulary' (set of words) of the language of our LLM. Each configuration will serve as the word-to-vector embedding of some word from the underlying `Turing machine language'. The result of the training phase is the representation of the complete transition table for tape configurations of $\cal M$ in the LLM's memory for all inputs of size $n$ with $S(n) \leq k$.

After the training phase, the simulation of $\cal M$ on any given input of length $n$ such that $S(n) \leq k$ (given to the LLM in the form of a prompt) can start. The simulation starts from $\cal M$'s initial configuration: the input word is supplied at $\cal L$'s interface and stored for reference,  and the initial tape configurations of $\cal M$ are given in the respective word-to-vector embedding, i.e., as a word in the ``Turing machine language".

If the training phase was long enough (including valid fragments for all possible transitions) and the desired transition table of tape configurations `fits' into the LLM,  the model will generate the correct
`sentence' of consecutive words that corresponds to $\cal M$'s computation on the given input, as for each configuration there is exactly one successor configuration (because $\cal M$ is deterministic). As we assumed that $\cal M$ always halts, the generated sentence will be finite. When complete, $\cal L$ can answer by outputting `accept' or `reject' depending on the final word.

If the training phase was not long enough, which may happen if $k$ is large, then configurations may arise for which the proper transition rule is missing and is yet to be `learned'. Eventually a 100\%
correctness of the simulating LLM can be achieved by tuning its attentional mechanism (cf.\ \cite{Vas:17}) to follow the transition function among successive configurations of the simulated
TM, as in the proof of Theorem \ref{thm:FST2LLM}. There is no need to track the relations between the words (configurations) across long sequences.

Thus, $\cal L$ eventually simulates $\cal M$ on all inputs within the bounds it can handle.
\end{proof}

Theorem \ref{thm:interior} can be seen as a generalization of Theorem \ref{thm:FST2LLM} although,  strictly speaking, the simulating LLM need not be `fixed' (non-adapting). The LLM is likely to be very large. As the entire transition table for $\cal M$'s work tape configurations for inputs of size $n$ must be represented in the word-to-vector embeddings of $\cal L$, the simulating LLM is likely to have a space complexity of at least order $O(c^k)$, where $c\ge 2$ is a bound on the size of the alphabets and the state set of $\cal M$ and $S(n) \leq k$.

A further remark can be made. The proof shows that an LLM $\cal L$ can be designed to generate the chain of configurations corresponding to $\cal M$'s computation on an input, regardless of what the purpose of the computation actually is. If, for example, $\cal M$ was meant to compute a more general recursive function of the input instead, then $\cal L$ could be used equally well to obtain (an encoding of) the resulting function value that is represented in the final configuration of $\cal M$. This opens the way to the use of LLMs for `computing' arbitrary {\em recursive functions,} although it is uncommon that the LLM may well have to go through many `internal' word generations before it can actually output an answer. In several recent studies, the possible extension of LLMs to allow for precisely these extended chains of inferences are explored \cite{Merrill:23,Nye:21}.

\smallskip
We now argue that Theorem \ref{thm:interior} even holds for Turing machines with advice, a very
powerful variant of the TM model that we will employ below.  A {\em Turing machine with advice}
(TM/A) is a (deterministic multi-tape) Turing machine with an oracular input facility which, when the
machine is given any input $w$, provides the TM with an extra read-only input in the form of a finite string (`advice') that depends only on the length of $w$, i.e.\ that is the same for all inputs of the given length. Advice models the possibility that TM programs can get adjusted or modified over time, especially as input sizes increase. The advice string is placed on a separate read-only advice tape. Similar to the original input, the length of the advice is not counted into the space complexity of the respective machine.

\begin{corollary}\label{cor:interior} Let $\cal M$ be a TM/A of space complexity $S(n)$. Then, for any $n$ and $k$ with $S(n)\le k,$ there is an LLM  $\cal L$ using word-to-vector  embeddings of size $O(k)$ simulating $\cal M$ on any input of length $n.$
\end{corollary}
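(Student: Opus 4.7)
The plan is to reduce this to Theorem \ref{thm:interior} by exploiting the defining feature of the advice model: the advice string $\alpha(n)$ depends only on $n$, not on the particular input of length $n$. Once $n$ is fixed, $\alpha(n)$ is a specific finite string, and $\cal M$ restricted to inputs of length $n$ behaves exactly as an ordinary deterministic multi-tape TM whose designated read-only tape is permanently loaded with $\alpha(n)$.

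First, given $n$ and $k$ with $S(n)\le k$, I would introduce an auxiliary deterministic multi-tape TM $\cal M'_n$ obtained from $\cal M$ by pre-initializing the advice tape to $\alpha(n)$ and then treating it as an ordinary read-only auxiliary tape. Because the advice length is, by definition, not counted toward space complexity, the work-tape space complexity of $\cal M'_n$ on length-$n$ inputs is still at most $S(n)\le k$. I would then invoke Theorem \ref{thm:interior} on $\cal M'_n$ with parameters $n$ and $k$ to obtain an LLM $\cal L$ with word-to-vector embeddings of size $O(k)$ that simulates $\cal M'_n$ on every input of length $n$; by construction, this is precisely the desired simulation of $\cal M$.

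The step that I expect to require the most care is the encoding of the advice tape inside the configurations handled by $\cal L$. The symbols of $\alpha(n)$ themselves need not appear in any configuration: since every training trace for $\cal L$ is generated with the same fixed $\alpha(n)$, the dependence of the transition function on the current advice symbol is absorbed into the transitions the LLM learns, exactly as the work-tape transition function is absorbed in the proof of Theorem \ref{thm:interior}. What does need to be recorded in each configuration is the current position of the head on the advice tape, costing $O(\log |\alpha(n)|)$ bits. For advice of length at most $2^{O(k)}$, this counter fits comfortably within the $O(k)$ embedding budget alongside the $O(\log n)$ input-head counter already employed in the base theorem; for substantially longer advice one would have to enlarge the embedding size accordingly, but that regime lies outside the scenarios relevant to the ECTT argument that follows.
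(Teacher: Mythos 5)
Your proposal is correct and follows the same overall strategy as the paper -- reduce to Theorem \ref{thm:interior} by exploiting that, for fixed $n$, the advice $\alpha(n)$ is just a fixed finite string -- but the mechanism differs in a way worth noting. The paper re-opens the proof of Theorem \ref{thm:interior} and appends to each embedding the single advice symbol that $\cal M$ reads when entering the corresponding configuration, so the advice information travels explicitly inside the word-to-vector representation at a cost of one extra symbol per embedding. You instead hard-wire $\alpha(n)$ into an auxiliary machine ${\cal M}'_n$, absorb the advice-dependence into the learned transition function, and record only the advice-tape head position as an $O(\log|\alpha(n)|)$-bit counter in the configuration; this lets you invoke Theorem \ref{thm:interior} as a black box, which is arguably the cleaner reduction, and your observation that the advice symbols themselves need never appear in the configurations is a genuine (small) economy over the paper's encoding. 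The one place where the paper's choice pays off is in the remark following the corollary, that the \emph{same} LLM should handle every length $n$ with $S(n)\le k$: since the advice varies with $n$, your construction as stated yields an LLM tied to a single $n$, and you would need to fold in the advice strings for all relevant lengths (or adopt the paper's per-embedding advice symbol) to recover that stronger claim. For the corollary as literally stated, which fixes $n$, your argument is complete at the paper's level of rigor.
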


\begin{proof}(Outline.) Referring to the proof of Theorem \ref{thm:interior}, one can clearly add $\cal M$'s advice as an extra input without altering the argument. The advice can be stored in the embeddings used by $\cal L$ at the cost of adding only a single advice symbol to each embedding. This is the symbol read by $\cal M$ from its advice tape at the time when $\cal M$ enters the configuration represented by the respective embedding. Since in each step $\cal M$ reads at most one advice symbol, all advice symbols read during the computation of $\cal M$ will fit into embeddings that are available in $\cal L$ for simulation of $\cal M.$

The simulation proceeds similar to the proof of Theorem \ref{thm:interior}. The extension of the embeddings by advice symbols will prolong the size of each embedding of the resulting LLM by one symbol.
\end{proof}

\noindent
From the proofs it is clear that the same LLM $\cal L$ will ultimately correctly simulate $\cal M$ on inputs of every length $n$ as long as $S(n) \leq k$.

\smallskip
It is important to note that the adjustments of any LLM specialized to simulating TMs with or without advice did not put the resulting LLM outside the family of standard LLMs. When compared to LLMs that process a natural language, the necessary adjustments affect just the form of the word embeddings and the working of the attention mechanism. But the main ideas of the LLM architecture, its structure and working, have remained intact. Note that the simulating LLM in the inference phase is fixed and deterministic when fully trained.

The results clearly demonstrate that no single LLM can compute every function that a TM can. {\em No LLM is Turing complete.} This is because the size of the vector embeddings of the words in the simulating LLM must go hand in hand with the space complexity of the simulated TM, and this is not possible for fixed size embeddings. In fact, it is the consequence of the fact that any LLM is a finite-state machine and a TM generally isn't, and trivially keeps LLMs within the scope of the ECTT.

On the other hand, Theorem \ref{thm:interior} and Corollary \ref{cor:interior} give evidence of the fact that by specifying more and more `advanced' TMs, even with advice, and by increasing $n$ and $k$, more and more powerful LLMs can be constructed. It suggests that LLMs can be `scaled' to match any computational challenge they are up against. This is a possibility that must be anticipated in our further investigation.

It is an open problem whether the simulations from Theorem \ref{thm:interior} and Corollary \ref{cor:interior} can be improved. For instance, does the model need to explicitly represent the full
dictionary of the necessary ``Turing machine words"? It seems to depend on the possibilities of the internal model of an LLM. Still, one thing is sure: any simulation of an infinite-state Turing machine by a finite-state machine (like an LLM) is necessarily limited by the lack of computational resources of the latter machine, and therefore is confined to initial segments of computations of the former machines. Luckily, the simulations from Theorem \ref{thm:interior} and Corollary \ref{cor:interior} are fully sufficient for our further purposes.

\subsection{Non-uniform computation and lineages of LLMs}\label{sec:ITM/AvsLLMs}
By their very definition, the LLMs are interactive computational devices. During their operation,
future prompts can react to the answers to the previous prompts. Also, by the results from the previous section, it is conceivable that LLMs are adjusted or modified over time, or even do so themselves when needed or desired. What could, ultimately, be the computational `reach' of this conception of LLMs?

\smallskip
\noindent
{\bf Lineages} We model this very general notion of an evolving LLM by a sequence ${\mathfrak L} =
 {\cal L}_1, {\cal L}_2, \cdots $ of consecutive LLMs called a {\em lineage} (after \cite{Verb:04}).  Each member of such a sequence is specialized in performing computations that require specific `technical' parameters, such as a specific size of word embeddings, a specific input sizes and so on (like the values of $n$ and $k$ in Theorem \ref{thm:interior}).  So far, this is the standard approach as known in non-uniform complexity theory, e.g.\  in the study of Boolean circuits or neural networks.

We assume that a lineage of evolving LLMs ${\mathfrak L} = {\cal L}_1, {\cal L}_2, \cdots $ can process finite but otherwise unbounded streams of inputs as follows. Processing is initiated by ${\cal L}_1$. Suppose the processing of the current stream has progressed to LLM ${\cal L}_i$, for some $i \geq 1$,  and that a trigger of some sort is generated that the lineage must `switch' to the next `evolution' ${\cal L}_{i+1}$ of the evolving LLM. (The trigger could be a technology update, reaching a memory limit, and so on.) Then the processing is continued by ${\cal L}_{i+1}$ {\em after} it is conditioned with the `knowledge' built up by ${\cal L}_i$, possibly after being `pre-trained' ahead of time on the input stream that was processed so far. ${\cal L}_{i+1}$ only produces answers and responses to the new inputs in the stream as it receives them.

We assume that for every lineage of LLMs ${\mathfrak L} = {\cal L}_1, {\cal L}_2, \cdots $, the constituent LLMs ${\cal L}_i$ are essentially pre-trained and non-adaptive (fixed). Any change or update that is not the result of `internal' computation is assumed, in principle, to lead to a next LLM in the lineage. The action of constructing and activating a next member of $\mathfrak L$ is generally called {\em model reconstruction}.

Our question about the position of LLMs with respect to the ECTT can now be concretized as follows: what is the position of lineages of evolving LLMs with respect to the ECTT?

\smallskip
\noindent
{\bf Interactive Turing Machines with Advice}
Before answering this question, we need more details about ITM/As. An {\em interactive TM} (ITM) is a (deterministic multi-tape) Turing machine that operates on a `stream'  of input symbols, supplied
at an input port. In this mode inputs are not fixed before the computation starts but new, unforeseen inputs may appear at the input port as the computation proceeds.  Inputs may depend on outputs that were produced earlier. Moreover,  the processing of a new stream may start from the working tape configuration in which the processing of the previous stream has terminated, if it was indeed finite. For a more detailed description, cf.\ \cite{vLW:01}.
%% added \ after cf.

Similar to TM/As, {\em interactive Turing machines with advice} (ITM/As) \cite{vLW:01} are ITMs that are extended with an advice facility.  In this model, a new advice string may be supplied and appended to the existing advice tape, every time a new input is read and input length is increased by $1$. ITM/As arguably are the most general machine model for non-uniform interactive information processing. (Cf.\ the discussion of the ECTT in Section 1.)

\subsection{Simulation of ITM/As by lineages of LLMs and Vice Versa}
Will simulations as in Theorem \ref{thm:interior} and Corollary \ref{cor:interior} work also in the extended setting of lineages and ITMs? Of course, as long as the input streams are confined to single members of a lineage and satisfy the fixed assumptions of the theorem and the corollary, the simulations will work. But what happens when the streams fail to satisfy these assumptions, e.g.\ when streams are not bounded ahead of time? We first focus on the simulation of ITM/As.

To simulate an ITM/A by a lineage of LLMs, we must solve two problems. First, the simulation must consider the fact that the space complexity of the simulated machine may grow with the size of the input, and second, the use of advice (which depends only on the input size) must be taken into account as well. (We consider the simulation on finite but unbounded inputs only, as infinite inputs are not realistic as prompts for  LLMs.)

The general idea of the simulation is to simulate computations of the given ITM/A $\cal M$ per partes by members of a lineage of LLMs,  as the individual sequences of configurations of $\cal M$ unfold, having increasing requirements on the computational resources of the simulating LLMs.  Each sequence of configurations is simulated following Corollary \ref{cor:interior} by a dedicated member of ${\cal L}\in\mathfrak L$ as long as the configurations ``fit" into the word embeddings of $\cal L$ and the advice of $\cal M$ remains unchanged.

\begin{theorem}\label{thm:family} Let $\cal M$ be an ITM/A. Then, there exists an lineage of evolving
 LLMs  $\mathfrak L$  simulating $\cal M$ on all input streams.
\end{theorem}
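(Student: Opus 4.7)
The plan is to construct the lineage $\mathfrak L = {\cal L}_1, {\cal L}_2, \ldots$ inductively, so that at every moment the currently active LLM is an instance of the construction from Corollary \ref{cor:interior}, sized for the current segment of $\cal M$'s computation. Let $S(n)$ denote the space complexity of $\cal M$. First I would fix a strictly increasing sequence of input-length thresholds $n_1 < n_2 < \cdots$ and a matching sequence of embedding-size bounds $k_i \geq S(n_i)$. For each $i$, I would invoke Corollary \ref{cor:interior} with parameters $n_i$ and $k_i$ to obtain a fixed LLM ${\cal L}_i$ that uses word-to-vector embeddings of size $O(k_i)$ and correctly simulates $\cal M$ on any prefix of the input stream whose cumulative length is at most $n_i$, provided the advice consumed by $\cal M$ for those prefixes has been baked into ${\cal L}_i$'s vocabulary in the manner described in the proof of that corollary.

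The lineage then processes a stream as follows. The first member ${\cal L}_1$ begins from $\cal M$'s initial configuration and unfolds configurations step-by-step as in the proofs of Theorem \ref{thm:interior} and Corollary \ref{cor:interior}. The handoff trigger from ${\cal L}_i$ to ${\cal L}_{i+1}$ fires as soon as \emph{either} the next incoming symbol would push the cumulative stream length past $n_i$ (so a new advice symbol, unknown to ${\cal L}_i$, becomes relevant) \emph{or} the next configuration of $\cal M$ would exceed $k_i$ cells. Upon activation, ${\cal L}_{i+1}$ has been pre-trained on a sufficient set of valid configuration fragments of $\cal M$ up to space $k_{i+1}$ under the enlarged advice, and the last configuration of $\cal M$ produced by ${\cal L}_i$ is re-encoded into ${\cal L}_{i+1}$'s larger embeddings and fed to ${\cal L}_{i+1}$ as its initial prompt, from which it resumes the simulation. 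Since the thresholds $(n_i, k_i)$ grow without bound, every finite but arbitrarily long input stream is eventually handled by some member of $\mathfrak L$, matching the definition of a lineage in Section \ref{sec:ITM/AvsLLMs}.

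The main obstacle I expect is making the handoff semantically faithful. Concretely, I would have to argue three points: (i) the syntactic representation of $\cal M$'s running configuration produced by ${\cal L}_i$ is losslessly transformable into a valid initial prompt for ${\cal L}_{i+1}$, even though ${\cal L}_{i+1}$ uses strictly larger embeddings; (ii) the advice that $\cal M$ would consult at the current input length is absorbed into ${\cal L}_{i+1}$'s embeddings in a way that is consistent with the portion of advice already implicitly used by ${\cal L}_i$, as in Corollary \ref{cor:interior}; and (iii) the pre-training of ${\cal L}_{i+1}$ covers exactly the transitions of $\cal M$ that can arise after the handoff, so that no configuration encountered post-handoff is outside its learned transition table. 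Each of these is anticipated by the arguments underlying Corollary \ref{cor:interior}, but they must be carried out uniformly across the infinite sequence of handoffs; once this is in place, a straightforward induction on $i$ shows that $\mathfrak L$ reproduces $\cal M$'s response to every prefix of every input stream, which is what the theorem demands.
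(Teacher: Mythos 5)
Your proposal follows essentially the same route as the paper: both build the lineage inductively from Corollary \ref{cor:interior}, with a model reconstruction triggered by exactly the same two events (the current configuration outgrowing the embedding size $k_i$, or a change of advice as the stream length increases), and both continue the simulation from the last configuration reached by the previous member. Your explicit attention to the handoff --- re-encoding the final configuration of ${\cal L}_i$ into the larger embeddings of ${\cal L}_{i+1}$ and keeping the advice consistent across the switch --- is a somewhat more careful rendering of what the paper states only implicitly, but it is the same argument.
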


\begin{proof} (Outline)
Our starting point is Corollary \ref{cor:interior}. Initially, assume that  $\cal M$ has space complexity
$S(n)$ and that we have chosen a `trigger' $k_1$ such that $S(n)\le k_1$ holds for some initial values of $n$, the number of symbols in the input stream so far. Then, by Corollary \ref{cor:interior}, there exists  an LLM  ${\cal L}_1$ using word-to-vector  embeddings of size $O(k_1)$ simulating $\cal M$ on the input stream for $n$ inputs with $n=1,2, \ldots.$ However, this simulation may have to come to a halt from two reasons.

First, for some value of $n,$ it may appear for the first time that $S(n)$ exceeds $k_1.$ This means that a configuration $\rho$  of $\cal M$  has been reached that no longer fits into the word-to-vector embeddings of size $O(k_1)$. To remedy this situation, we construct a new member ${\cal L}_2 \in \mathfrak L$ with embeddings of size $k_2>k_1.$ The respective embeddings will contain all configurations of $\cal M$ of size $k_2$ which are descendants of configuration $\rho,$ augmented, of course, with all possible inputs and advice symbols as required in the proof of Corollary \ref{cor:interior}.

Second, it might happen that for some value of $n,$ it still holds that $S(n)\le k_1,$ but that $\cal M$ gets a new advice as it reaches configuration $\rho.$ As before, this calls for a model reconstruction, this time constructing ${\cal L}_2\in\mathfrak L,$  with word embeddings of  a size $k_2$ with $k_2>k_1$ for all descendants of $\rho$ of size $O(k_2)$  and a new advice string. This also handles the case when both reasons occur simultaneously.

Proceeding inductively in the same way as indicated above, an evolving lineage ${\mathfrak L} = {\cal L}_1,{\cal L}_2, \cdots$ is obtained that simulates the ITM/A on all finite but unbounded
streams.
\end{proof}

We now consider the reverse simulation, of lineages of evolving LLMs by ITM/As. It is the simulation required for the ECTT argument.

Let ${\mathfrak L}={\cal L}_1,{\cal L}_2,\cdots$ be a lineage of evolving LLMs. Considering any LLM ${\cal L}_i$ in the lineage, it is useful to distinguish between its software and data on the one hand, and the environment in which it runs on the other. Before it `evolves', we view ${\cal L}_i$ as essentially fixed, but its `environment' can provide external sources that the LLM might use during its computation e.g.\ for probabilistic purposes. We assume that this provision  is independent of the particular input that is processed but part of the `generic' operation of ${\cal L}_i$. The LLM's action is then fully determined by its program and data (and history), if a full description of this interaction of the LLM with its environment over time is given as well. This is exactly what advice does, the rest can be simulated by an interactive Turing machine.

\begin{theorem}\label{thm:advice} Let ${\mathfrak L}={\cal L}_1,{\cal L}_2,\cdots$ be a lineage  of evolving LLMs. Then, there exists an ITM/A $\cal M$  simulating $\mathfrak L$ on all input streams.
\end{theorem}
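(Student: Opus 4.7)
The plan is to combine the finite-state simulability of each individual ${\cal L}_i$ with the non-uniformity power of the advice mechanism, so that the entire lineage is captured by a single ITM/A whose advice enumerates the successive members of $\mathfrak{L}$ together with the information needed to bridge each model reconstruction. This mirrors, in the reverse direction, the construction in Theorem \ref{thm:family}.

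First, I would appeal to Theorem \ref{thm:LLM2FST}: each ${\cal L}_i$ in the lineage is fixed, deterministic and of bounded size, so it admits a finite description $\langle {\cal L}_i \rangle$ (weights, vocabulary, embedding dimension, attention parameters) and is computationally equivalent to a deterministic FST. Next, let $\iota(n)$ denote the index of the LLM that is active when the $n$-th symbol of the input stream is being processed, and form the advice string of length $n$ as $\alpha(n) = \langle {\cal L}_{\iota(n)} \rangle \# \beta_n$, where $\beta_n$ records any conditioning or state-transfer data supplied by the environment at the moment the lineage switches from ${\cal L}_{\iota(n)-1}$ to ${\cal L}_{\iota(n)}$. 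Because advice is allowed to be non-recursive, this encoding is not constrained by computability considerations.

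The ITM/A ${\cal M}$ then operates as follows: at every incoming symbol it consults the newly revealed portion of the advice tape, keeps a ``currently loaded'' finite description of ${\cal L}_{\iota(n)}$ on a dedicated work tape, and simulates one FST step of that LLM on the incoming symbol, echoing whatever output the LLM would produce. When $\iota(n)$ increases, ${\cal M}$ overwrites the loaded description by $\langle {\cal L}_{\iota(n)} \rangle$, installs $\beta_n$ as the initial state of the new LLM, and continues. Each individual step is a bounded finite-state simulation, and the interactivity of the ITM handles the unbounded streaming of prompts and answers. Inductively, this yields a correct simulation on every finite but unbounded input stream.

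The main obstacle I anticipate is to justify that the externally driven model reconstructions --- which the paper explicitly allows to be arbitrary, possibly non-algorithmic events --- can really be absorbed by advice indexed by input length alone. The crucial observation is that the advice mechanism of an ITM/A is precisely the device designed to carry non-computable, length-indexed side information; any finite description $\langle {\cal L}_i \rangle$ and any finite conditioning record $\beta_n$ fit into $\alpha(n)$ by construction, and no effectivity of $\iota$, of $\langle {\cal L}_i \rangle$, or of $\beta_n$ is required. A secondary subtlety is that triggers for reconstruction might, in principle, depend on the content of the interaction rather than on its length; to keep the advice length-indexed one either assumes, as is natural for deployed systems, that reconstructions are scheduled independently of input content, or else folds any content-dependent trigger logic into the deterministic program of ${\cal M}$ itself, using advice only to describe the sequence of updated LLMs.
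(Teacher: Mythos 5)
Your proposal is correct and follows essentially the same route as the paper's own proof: the advice for input length $n$ carries a full finite description of the currently active LLM (plus any environmental provision), and the classical Church--Turing Thesis lets the ITM/A simulate that described finite machine step by step on its portion of the stream. Your explicit handling of the switching index $\iota(n)$ and of possibly content-dependent reconstruction triggers is a welcome extra precaution, but it refines rather than changes the paper's argument.
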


\begin{proof} (Outline.)
The result follows from the description of how lineages work, provided an ITM/A $\cal M$ can be
designed that, for every $i \geq 1$, will simulate  the $i$-th LLM of the lineage whenever this LLM's turn has come, i.e.\ when the $i$-th switching point is passed in the processing of the input stream.

For $i \geq 1$, let the $i$-th advice of $\cal M$ be defined to be $D({\cal L}_i)$, a full description of ${\cal L}_i$ (including any provision from its environment that applies). On any input stream, if $i$ inputs have been processed, ${\cal M}$ calls its advice function to get access to  $D({\cal L}_i)$ on its advice tape, enabling it to simulate ${\cal L}_i$ when its time in the simulation of the lineage has come. Thanks to the classical Church-Turing thesis this is possible, as $D({\cal L}_i)$ is an algorithmic description of a real digital `machine'. Hence, $\cal M$ computes the same transduction as ${\cal L}_i$ on its part of the input stream.
\end{proof}

Theorems \ref{thm:family} and \ref{thm:advice} can be combined into a single statement as follows.
\smallskip

\begin{theorem}\label{thm:combined} For each lineage ${\mathfrak L }$ of evolving  LLMs there is an ITM/A $\cal M$ such that $\cal M$ simulates $\mathfrak L$ on all input streams, and vice versa.
\end{theorem}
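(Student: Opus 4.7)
The plan is that Theorem \ref{thm:combined} is essentially a packaging of the two preceding results, so the proof I would write is a short composition rather than a new construction. First, given any lineage $\mathfrak L = {\cal L}_1, {\cal L}_2, \ldots$ of evolving LLMs, I would invoke Theorem \ref{thm:advice} verbatim to obtain an ITM/A $\cal M$ whose advice function encodes the descriptions $D({\cal L}_i)$ and which therefore reproduces the input-to-output transduction of $\mathfrak L$ on every finite but unbounded input stream. This settles the direction ``for each $\mathfrak L$ there is an $\cal M$ simulating it'' and is truly just a quotation of the previous theorem.

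For the ``vice versa'' half, two readings are possible and I would want to address both briefly to avoid ambiguity. Under the weaker (and, I think, intended) reading, the claim is that the class of ITM/As is also simulable by lineages, i.e.\ for every ITM/A there exists some simulating lineage; this is exactly Theorem \ref{thm:family}, applied without modification. Under the stronger reading one would want the specific pair $(\mathfrak L, \cal M)$ produced in the first half to simulate each other pointwise. This too is essentially free: since $\cal M$ was constructed to reproduce $\mathfrak L$'s transduction step by step on every input stream, the two devices are externally indistinguishable, so $\mathfrak L$ simulates $\cal M$ on precisely those input streams on which $\cal M$ was defined to mimic $\mathfrak L$. If one prefers a more symmetric construction, one can instead apply Theorem \ref{thm:family} to the constructed $\cal M$ to obtain a lineage $\mathfrak L'$ simulating $\cal M$, and then note that $\mathfrak L'$ and the original $\mathfrak L$ compute the same transduction because both agree with $\cal M$.

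I do not expect a genuine obstacle in carrying this out; the whole content of the statement is already discharged by Theorems \ref{thm:family} and \ref{thm:advice}, and what remains is purely a matter of phrasing. The one point I would be careful about is flagging the interpretation of ``and vice versa'' so that the reader sees the theorem as a mutual-simulability result between the two \emph{classes} of devices (lineages of LLMs on one side, ITM/As on the other), which is exactly the symmetric statement needed to justify the ECTT claim for lineages of LLMs that the paper is building towards.
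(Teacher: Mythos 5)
Your proposal matches the paper exactly: Theorem \ref{thm:combined} is stated there with no separate proof, introduced only by the remark that Theorems \ref{thm:family} and \ref{thm:advice} ``can be combined into a single statement,'' which is precisely your composition of the two directions. Your additional care in disambiguating the ``vice versa'' reading is a reasonable clarification but not a departure from the paper's argument.
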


From the point of view of computational complexity theory, Theorem \ref{thm:combined} is significant because it characterizes the computational power of `evolving LLMs'. Namely, it is known that Turing machines with advice are more powerful than classical TMs, due to the effect of advice (cf.\ \cite{vLW:01}). Therefore, Theorem \ref{thm:combined} can be said to express that lineages of LLMs have  {\em `super-Turing' computational power.} By this we do not mean that such lineages can solve undecidable problems. We merely claim that such lineages cannot be simulated by `classical' ITMs (i.e., ITMs without advice). For a more comprehensive discussion of the computational power of ITM/As, we refer to~\cite{vLW:01}.

\section{Discussion of  the amazing knowledge generation ability of very large finite-state transducers}

We now review the results we obtained from a more detached viewpoint, in the broader perspective of related fields like computability, computational complexity theory, AI theory, robotics, and cognitive sciences. The common denominator in our discussion will be to point to the potential of our findings for
a better understanding of the essential qualities and limitations of the new emerging information processing technology represented by evolving LLMs. The discussion aims to bring thought-provoking
insights, provide novel perspectives to the ongoing debates of LLMs, and challenge future AI research.

\medskip\noindent{\bf Computability and complexity aspects}
In these domains, the main message has been the confirmation that the Extended Church-Turing Thesis is valid also for the latest achievement in the field of IT technology, the development of evolving LLMs. This result could be obtained, thanks to the design of a novel simulation of ``small" (resource-bounded) Turing machines entirely within LLMs as in Theorem \ref{thm:interior} or Corollary \ref{cor:interior}. Subsequently, in the end, this has led to the proof of the super-Turing computational power of these AI systems, as a consequence of Theorem~\ref{thm:combined}. Related results comprised a complete characterization of the computational power of single  LLMs in Theorem \ref{thm:FST-LLM}, and that of lineages of LLMs in Theorem \ref{thm:combined}. These results seem to be the first results dealing with the complexity of LLMs from an automata theoretic point of view.

The results are a bit paradoxical -- mankind's most complex computational devices turn out to be computationally equivalent to one of the simplest fundamental models of computation, finite-state transducers.  LLMs are, in fact, instances of highly resourceful large scalable finite-state transducers.

\medskip\noindent{\bf The illusory language-processing power of LLMs}
In Theorem \ref{thm:FST-LLM} we saw that the computational power of LLMs is on par with that of
FSTs. This raises questions concerning the natural language-processing power of LLMs.

Namely, the languages generated or accepted by FSTs are regular. How it is then possible that, in the
practice of LLMs, these devices seem to correctly recognize long sequences of natural languages which in general are known to be more complex than words in a regular language (cf.\ \cite{Pu:82})?  This
conundrum could be explained by the fact that a finite swath of a language of whatever complexity, captured in the training set, can always be seen as part of a regular language. Beyond this swath, for
sufficiently long inputs, the language behaves as a regular language. On the one hand, this explains the apparent inherent efficiency of giant LLMs in processing natural language texts of a reasonable length we see in practice. On the other hand, it also explains the limited abilities of LLMs to deal with tasks not sufficiently represented in the training set, such as simple arithmetic, logical operations like abduction, planning, etcetera.

Nevertheless, prolonging the context window (hence the input length) indefinitely will reveal the true recognition power---that of FSTs---of LLMs, which from a certain internal configuration will start
to cycle (or halt). This seems to contradict the recently appearing articles about efficient methods to scale LLMs to infinitely long inputs (cf.\ \cite{Mu:24}). The theoretical catch here is that such
methods cannot work in fixed memory spaces like classical LLMs have. They need additional space to enable the long-span attention mechanisms to work. This space may grow with the growing input size. To overcome this difficulty either the full power of ITM/As is required or that of an infinite lineages of evolving LLMs (cf.\ Theorem  \ref{thm:family}).

Another problem with viewing LLMs as FSTs is that, in FSTs, the semantics of transitions is encoded
entirely in the ``relationships'' between the states in the state diagram, not in their ``names",
because  the states can be arbitrarily renamed (this seems to be an important observation).  On the other hand, in LLMs almost all of the semantics is encaged in the information `inside' a state. The analogy between relationships among automata states and the semantic processing of language data is hard to see.  When thinking about the semantics of words of a natural language, what is of importance is the relationships among the meanings of words, not their ``names". Here may also lay the roots of the easiness with which LLMs cross the boundaries between various existing natural languages.

\medskip\noindent{\bf The problem of understanding}
In the domain of AI, the amazingly versatile abilities of LLMs put these systems into the position of the harbingers that announce a paradigm shift afflicting the entire AI ecosystem. Our results
contribute to a better apprehension of the nature of the information processing in LLMs. The key observation in this respect is the analogy between natural language processing in LLMs and general computation realization by Turing machines. While natural language processing in LLMs works with finitely many words of a natural language, within a general TM computation each configuration is seen as a word of the ``Turing machine language". Such a language has potentially an infinite number of words. Syntax and semantics of this language are described by the underlying TM ``program". It describes the relationship between the words generated by the program and, in the end, between the input and the output of the program. In this way, it explains how the program's execution transforms the input to the output. This can be seen as a correctness proof of the program, or as a formal proof of (the machine's) understanding (of what it had done). Of course, this form of understanding is different from what we, humans, understand as understanding.

It is here where the study of programming language theory can inspire, e.g., the ongoing debates on understanding by LLMs (cf. \cite{MK:22}). The analogy between natural language processing by LLMs and the processing of the TM language by a TM may shift the debate to a firm mathematical ground.

\medskip\noindent{\bf Understanding understanding}
To illustrate the difference in understanding in LLMs and TMs, let us compare the ``mechanism of understanding" in an LLM processing a natural language, and in an LLM simulating a TM according to Theorem \ref{thm:interior}. In the former case, the decision to generate the next word of the underlying natural language is based on the limited semantic context based on the vector embeddings of several meticulously chosen words, and the gigantic linguistic background knowledge stored in the form of neural nets. In the latter case, the decision to generate the next word of the Turing machine language (i.e., the next TM configuration) is based on the entire history of computation represented by the sequence of configurations entered by the machine until that time. Moreover, any TM computation makes implicit use of the designer's background knowledge that is already embedded in the design of the underlying TM program. This kind of knowledge is tailored to the intended purpose of the computation.

Which of the two decisions concerning the prolongation of both computations being compared, is based on a more profound knowledge of the situation? The winner is the TM, because its decision is based on the maximal available information it could have directly and indirectly at its disposal.

\medskip\noindent{\bf Competence without linguistic understanding?}
From an evolutionary point of view, it seems that {\em the key to the notion of understanding is understanding in non-linguistic systems.} ``Human-like understanding" adds a layer to the understanding in the AI systems of the latter type. Contemporary wisdom is that human-like understanding is based on concepts -- internal mental models of external categories, situations, events, and one's internal state and ``self" \cite{MK:22}. LLMs can build internal representations of external categories, situations, and to some extent, one's internal state mediated to the system via textual information. Neural networks are good at building such kinds of representations and excel in verbally expressing them. However, they fail to adequately represent the events and the ``self" concept. Representation of events calls for representing the sequences of situations and actions, and the LLMs lack adequate means for doing that. Speaking about the ``self" concept is difficult in the case of disembodied entities.

Except for linguistic expressions, non-linguistic embodied AI models can deal with all the concepts mentioned before. Moreover, in the form of multi-modal cyber-physical systems, equipped with memory, sensory, motor, and feedback units, they can do more since their artificial senses are grounded in the real world. Such systems can aspire to represent, and deal with, events and realization of the concept of {\em ``what is it like, for the system, to understand"}.  Memory augmenting of AI systems may help to remember, recognize, and recall important events while multi-modality in the form of complementary external and internal sensations allows to represent the last mentioned concept that is considered to be the hallmark of consciousness  \cite{Na:74}. Such systems will find themselves on the verge of artificial phenomenal experience (cf. \cite{WvL:19,WvL:21}).

Note that the external view of non-linguistic understanding we are discussing above is ``competence without comprehension", while the internal view, from the inner perspective of the system, is ``what it is like to understand". It may well be that the latter concept presents the missing link even in our understanding to human understanding.

\medskip\noindent{\bf The inner life of LLMs}
There is more to the previous comparison between LLMs and TMs. At each computational step, an embodied TM governing a cognitive cyber-physical system has complete information available not only about its own ``current state", from all its external and internal sensors, motors, and the respective feedback from those devices that the system possesses, but also about all its past states.

Note that each configuration of such a TM contains a complete representation of the machine's ``phenomenal experience" at that time (cf.\ \cite{WvL:21}). This gives the LLM simulating a TM as in Theorem \ref{thm:interior} (that by its very definition remembers all possible ``states of mind" of such a machine, one is tempted to say) an opportunity to ``time travel" backward and forwards over these states, and thus explore its past decisions or consider its possible ``futures" and adjust its behavior accordingly. What never-thought-of possibilities for classical LLMs! This observation is of interest, especially in the context of the recent announcement on a new consensus: there is {\em ``a realistic possibility" for elements of consciousness in reptiles, insects, and mollusks} \cite{Le:24}. If so, why can't it occur in the AI systems whose complexity competes with such simple creatures? Is the feeling ``how it is like to understand?" the missing element in our understanding of understanding? In this context, considerations about minimal machine consciousness are the first signs of similar general trends in AI (cf.\ \cite{WvL:19,WvL:21}).

In general, it might be possible to consider various high-level non-linguistic cognitive abilities of LLMs via formal counterparts in the TM environment. For, how else could these abilities be ascribed to LLMs without having their mirror in the language of TMs? In this way, variants of representations of a Turing machine's  computations could serve as {\em drosophila} for ideas about LLMs.
%% added \em

\medskip\noindent{\bf Building a bridge between rule-based and biological computation}
The paradigm shift in our apprehension of computations mentioned above  is pregnantly expressed
%% replaced `above mentioned' by `..... mentioned above' to eliminate margin problem (referee)
precisely by Theorem \ref{thm:combined}, which builds a bridge between biologically-inspired and logico-mathematical ways of information processing. Although equipollent from a computability point of view, i.e.\ expressing the same computational power by different means, the two ways do not have equivalent significance and reach. ITM/As epitomize the classical view of computations, which are seen merely as data transformation tools. The interpretation of the results for ITM/A computations is left to its user.  The view of computations through the lens of LLMs puts stress on their semantic contents---it liberates the users from the burden of data interpretation by  ``automatizing" that task through answers in a natural language. Another view of the respective computations might be that ITM/As mostly capture the processing of non-verbal information, while LLMs capture that of linguistic, verbal information. As Browning and LeCun \cite{BLC:22} remind us, {\em ``abandoning the view that all knowledge is linguistic permits us to realize how much of our knowledge is non-linguistic"}.  Nevertheless, in both cases, these are but different forms of knowledge that are produced.  Viewing computations as knowledge generators has been coined and used by us since the last decade \cite{WvL:13,WvL:15,WvL:23}.

\medskip\noindent{\bf Semantics is all we need}
Although it may seem that LLMs let us forget about Turing Tests and Chinese Room experiments, the opposite is true. These experiments focus our attention on semantics and understanding, their importance, their representation, processing and interpretation of information that results from computation. The above-mentioned tests and experiments expose the problem of expressing the semantics of computations in their syntax and flow. Perhaps they open the problem of what are the semantic resources of computation, and how are they best represented, utilized and shared. In the case of LLMs, these seem to unequivocally be the word-to-vector embeddings. For ITM/As, the machine configurations. Is there some general theory behind this? In any case, these speculations support the view of computations as knowledge generators (cf.\ \cite{WvL:13,WvL:15}). This view
puts stress on the meaning of what is computed, rather than on how a computation is performed.

\medskip\noindent{\bf Is knowledge computable?}
The answer to this question depends on how we define ``knowledge" and ``computable". There is no one, universally agreed-upon definition of what is knowledge. Within our quest of understanding computation (cf.\ \cite{vLW:01,WvL:08,WvL:13,WvL:15}, we see computations as knowledge generators. LLMs are typical examples of computations generating knowledge \cite{WvL:23}, and especially, wisdom as the agentic form of knowledge. This is a type of knowledge that can be inferred from human-produced texts, programs, pictures, videos, various multimodal sources, and the likes.

If we accept that knowledge is what is generated by computations, and that Turing machines are recognized as the general model of computation in computability theory, then from Theorem \ref{thm:advice} it follows that knowledge generation is a non-algorithmic process that cannot be performed by the classical Turing machines. Or, to make knowledge generation computable,  shouldn't one redefine the notion of computability using interactive Turing machines with advice?

\section{Conclusion}
The era of interactive non-uniform information processing at scale is here. The Extended Church-Turing Thesis formulated as a vision more than 20 years ago, has appeared to hold for LLMs, too. The information processing by evolving LLMs heralds the advent of the new understanding of computation, and especially of AI. Despite their known deficiencies, the LLMs are wonderful, exciting, and so far quite mysterious information processing devices possessing a maximal computational power like we can expect from massive classical computations. It remains to be seen  where and what the new development of LLMs-like devices will bring us in the future. Undoubtedly, the Extended Church-Turing Thesis will cover our steps in this endeavor.

\frenchspacing

\end{document}